\documentclass[a4paper,UKenglish,cleveref, autoref, thm-restate,numberwithinsect]{lipics-v2021}
%This is a template for producing LIPIcs articles. 
%See lipics-v2021-authors-guidelines.pdf for further information.
%for A4 paper format use option "a4paper", for US-letter use option "letterpaper"
%for british hyphenation rules use option "UKenglish", for american hyphenation rules use option "USenglish"
%for section-numbered lemmas etc., use "numberwithinsect"
%for enabling cleveref support, use "cleveref"
%for enabling autoref support, use "autoref"
%for anonymousing the authors (e.g. for double-blind review), add "anonymous"
%for enabling thm-restate support, use "thm-restate"
%for enabling a two-column layout for the author/affilation part (only applicable for > 6 authors), use "authorcolumns"
%for producing a PDF according the PDF/A standard, add "pdfa"

%\pdfoutput=1 %uncomment to ensure pdflatex processing (mandatatory e.g. to submit to arXiv)
\hideLIPIcs  %uncomment to remove references to LIPIcs series (logo, DOI, ...), e.g. when preparing a pre-final version to be uploaded to arXiv or another public repository

%\graphicspath{{./graphics/}}%helpful if your graphic files are in another directory

\allowdisplaybreaks

\bibliographystyle{plainurl}% the mandatory bibstyle

\title{Sequencing Stochastic Jobs with a Single Sample} %TODO Please add

%\titlerunning{Dummy short title} %TODO optional, please use if title is longer than one line

\author{Puck {te Rietmole}}{Department of Mathematics,  Utrecht University, Utrecht, The Netherlands \and
\url{https://www.linkedin.com/in/puck-te-rietmole-883692264/} }{p.terietmole@students.uu.nl}{}{}%TODO mandatory, please use full name; only 1 author per \author macro; first two parameters are mandatory, other parameters can be empty. Please provide at least the name of the affiliation and the country. The full address is optional. Use additional curly braces to indicate the correct name splitting when the last name consists of multiple name parts.

\author{Marc Uetz}{Mathematics of Operations Research, University of Twente, Enschede, The Netherlands \and \url{https://people.utwente.nl/m.uetz/}
}
{m.uetz@utwente.nl}{https://orcid.org/0000-0003-4223-2435}{}

\authorrunning{P.\ te Rietmole and M.\ Uetz} %TODO mandatory. First: Use abbreviated first/middle names. Second (only in severe cases): Use first author plus 'et al.'

\Copyright{Puck te Rietmole and Marc Uetz} %TODO mandatory, please use full first names. LIPIcs license is "CC-BY";  http://creativecommons.org/licenses/by/3.0/

\ccsdesc[500]{Theory of computation~Scheduling algorithms}
\ccsdesc[500]{Theory of computation~Adversary models}
%\ccsdesc[100]{\textcolor{red}{Replace ccsdesc macro with valid one}} %TODO mandatory: Please choose ACM 2012 classifications from https://dl.acm.org/ccs/ccs_flat.cfm 

\keywords{Stochastic scheduling, Sampling, Approximation} %TODO mandatory; please add comma-separated list of keywords

\category{} %optional, e.g. invited paper

\relatedversion{} %optional, e.g. full version hosted on arXiv, HAL, or other respository/website
%\relatedversiondetails[linktext={opt. text shown instead of the URL}, cite=DBLP:books/mk/GrayR93]{Classification (e.g. Full Version, Extended Version, Previous Version}{URL to related version} %linktext and cite are optional

%\supplement{}%optional, e.g. related research data, source code, ... hosted on a repository like zenodo, figshare, GitHub, ...
%\supplementdetails[linktext={opt. text shown instead of the URL}, cite=DBLP:books/mk/GrayR93, subcategory={Description, Subcategory}, swhid={Software Heritage Identifier}]{General Classification (e.g. Software, Dataset, Model, ...)}{URL to related version} %linktext, cite, and subcategory are optional

%\funding{(Optional) general funding statement \dots}%optional, to capture a funding statement, which applies to all authors. Please enter author specific funding statements as fifth argument of the \author macro.

\acknowledgements{The first author thanks Rob Bisseling for advice and support. The second author thanks Wouter Fokkema and Ruben Hoeksma for helpful discussions that initiated this work, and Jos\'e Verschae for advice.}%optional

\nolinenumbers %uncomment to disable line numbering

\newcommand{\alg}{\texttt{\textup{SAM}}}
\newcommand{\p}{P}
\def\Exp{{\mathbb{E}}}
\newcommand{\E}[1]{\mbox{$\Exp[\,#1\,]$}}
\newcommand{\prob}[1]{\mbox{$\mathbb{P}[\,#1\,]$}}
\newcommand{\opt}[1]{\mbox{$\texttt{WSPT}(#1)$}}
\newcommand{\eps}{\varepsilon}

\newcommand{\rand}{\texttt{\textup{RND}}}
\newcommand{\inst}{\mathcal{I}}
\newcommand{\cost}{\textup{cost}}
\newcommand{\rog}{\textup{rog}}
\newcommand{\diff}{\Delta}

\usepackage{todonotes}

%Editor-only macros:: begin (do not touch as author)%%%%%%%%%%%%%%%%%%%%%%%%%%%%%%%%%%
\EventEditors{John Q. Open and Joan R. Access}
\EventNoEds{2}
\EventLongTitle{42nd Conference on Very Important Topics (CVIT 2016)}
\EventShortTitle{CVIT 2016}
\EventAcronym{CVIT}
\EventYear{2016}
\EventDate{December 24--27, 2016}
\EventLocation{Little Whinging, United Kingdom}
%\EventLogo{}
\SeriesVolume{42}
\ArticleNo{23}
%%%%%%%%%%%%%%%%%%%%%%%%%%%%%%%%%%%%%%%%%%%%%%%%%%%%%%

\begin{document}

\maketitle

%TODO mandatory: add short abstract of the document
\begin{abstract}
This paper revisits the well known single machine scheduling problem to minimize total weighted completion times. The twist is that job sizes are stochastic from unknown distributions, and the scheduler has access to only a \emph{single} sample from each of the distributions. For this restricted information regime, we analyze the simplest and probably only reasonable scheduling algorithm, namely to schedule by ordering the jobs by weight over sampled processing times.  In general, this algorithm can be tricked by  adversarial input distributions, performing in expectation arbitrarily worse even in comparison to choosing a random schedule. The paper suggests notions to capture the idea that this algorithm, on reasonable inputs, should exhibit a provably good expected performance.  Specifically, we identify three natural classes of input distributions, such that for these classes, the algorithm performs better than random on any input. 
\end{abstract}

\section{Introduction, Motivation, and Model}
\label{sec:intro}

A classical result in the  literature on scheduling algorithms is due to Smith \cite{Smith56}, showing how to minimize the total weighted completion time $\sum_j w_jC_j$ of $n$ independent, non-preemptive  jobs with weights $w_j$ and processing times $p_j$ on a single machine. They should be scheduled in the so-called WSPT order, that is, jobs must be ordered by non-increasing ratios $w_j/p_j$. The same is still true when the jobs' processing times are not known in advance, but instead governed by independent random variables $\p_j$. Indeed, when minimizing 
%the expected regret of any scheduling policy relative to the optimal solution in hindsight, this is equivalent to finding the policy that
%minimizes 
the expected 
total weighted completion times, $\E{\sum_j w_jC_j}=\sum_j w_j\E{C_j}$, the same exchange argument shows that scheduling the jobs in order of non-increasing ratios $w_j/\E{\p_j}$ is optimal in expectation~\cite{Rothkopf66}. 
Likewise, also for more general scheduling models, specifically on more than a single machine, some of the algorithmic framework that has been developed for computing approximately optimal solutions to deterministic scheduling problems, e.g. \cite{HSSW97,CMNS2001}, can be generalized to the more general setting with stochastic processing times, e.g.~\cite{MSU99,SU2005,MUV2006,GMUX2020,GMUX2021,J2023}. However  basically all of the previously cited approximation algorithms for stochastic scheduling problems, and also Rothkopf's result~\cite{Rothkopf66} need to assume that the expected processing times $\E{\p_j}$ are known.

This paper asks the simple question what can be done if this is \emph{not} the case. The paper therefore revisits the well understood single machine scheduling problem to minimize the total weighted completion times of jobs $\sum_j w_jC_j$.
We ask how much of the optimality of Smith's result~\cite{Smith56} can be recovered if the processing times  $p_j$ that the scheduler uses to determine the schedule, is in fact a \emph{sample from an unknown distribution $\p_j$}. In that setting, one should naturally minimize 
the expected total weighted completion times of jobs $\E{\sum_j w_jC_j}$.  We can equivalently see the problem as a \emph{stochastic} single machine scheduling problem, 
%to minimize the {expected} total weighted completion times of jobs $\E{\sum_j w_jC_j}$, 
but with the assumption that the scheduler has
access to only a \emph{single sample} of the processing time distributions.

This question relates to some recent work on prophet inequalities. There, it is known that a single sample per item (instead of knowing the whole distribution per item) is sufficient to get the same performance guarantee, namely $1/2$, for a simple online threshold policy, relative to the benchmark (the prophet) that knows which item will turn out to have the maximum value. This is true for the classical prophet inequality setting where one item is chosen in an online fashion~\cite{RWW2019}, and also for two items~\cite{PS2023}. One major difference with the scheduling problem considered in this paper is that we do not have a natural notion of a prophet, which means that we have to come up with a reasonable benchmark to compare to. This is discussed in  Sections~\ref{sec:prelim} and~\ref{sec:rand}.

The question can also be interpreted from the perspective of learning augmented algorithms. There, the goals is to find improved algorithms under the assumption that certain problem parameters can be learned by some hypothetical (machine) learning device. The better the learner, the better the resulting approximation guarantee, yet recovering a reasonable guarantee if the learner is bad.
As to scheduling, examples of work in this direction are \cite{KPS18,LLMV20,BMRS20,WZ20}; see \cite{alg-predictions-website} for more references. Our information regime about learning stochastic job sizes is minimalistic,  namely through one sample $p'_j$ from $\p_j$ only, for all jobs~$j$.

%Having not even a single sample would leave the scheduler clueless, and the only reasonable schedule would be random. Having at least a single sample available, one might think that one should be able to do at least better than random. Maybe not very surprisingly, this is not true in general, as 
It is maybe no surprise that, without any further assumptions, one can define malicious input distributions that yield ``wrong'' samples with high probability, rendering an algorithm that follows these samples to have arbitrarily bad performance. Such an example is given below in Section~\ref{sec:prelim}.
Our main result is the identification of sufficient conditions on the input distributions that rule out these undesirable effects.  This is formalized using a notion of \emph{relative optimality gap}, a scaled variation on the usual notion of approximation ratio, and in comparison to an adversary that chooses a schedule uniformly at random. We also argue that our positive results are tight in a mild sense, meaning that without any of these conditions, there is a corresponding counterexample.

\section{Preliminaries: Single Machine Scheduling by Samples}
\label{sec:prelim}
We consider $n$ jobs with independent processing time  distributions $\p_j$ and weights $w_j$, $j=1,\dots, n$.  The jobs have to be scheduled non-preemptively on a single machine with the goal to minimize $\sum_j w_jC_j$, where $C_j$ denotes the completion time of job $j$.  Let $p:=(p_1,\dots,p_n)$ denote possible realizations of $P:=(\p_1.\dots\p_n)$. We denote by $I=(w,\p)$ an instance. If processing time realizations $p\sim\p$ are known, the only optimal solutions are the sequences in so-called \texttt{WSPT} order, ratios $w_j/p_j$ non-increasing~\cite{Smith56}.

If the processing times are stochastic, the solution is a stochastic scheduling policy that is non-anticipatory in the sense of \cite{MRW84}, meaning that it cannot use information about actual realizations $p_j\sim \p_j$ before scheduling a job~$j$. If $\Pi$ denotes such a scheduling policy, let $C_j^{\Pi}(p)$ be job $j$'s completion time under policy $\Pi$ for realization $p$, then the expected cost of policy $\Pi$ on instance $I=(w,\p)$ is
\[
\cost_I(\Pi):= \mathbb{E}_{p\sim P}\left[\sum\nolimits_jw_jC_j^{\Pi}(p)\right]\,.
\]
Also let $\opt{p}$ be the minimal cost for realization~$p$, achieved by a \texttt{WSPT} order.
%(the completion time of a job~$j$ in the \texttt{WSPT} schedule is the sum of all processing times $p_k$ with $w_k/p_k\ge w_j/p_j$). %\puckmarg{I don't understand this formula. You are trying to define the cost of the offline optimum, correct?}\marcmarg{Right!} \puckmarg{I think you might be missing a factor $w_j$ in the sum expression for WSPT(p).}
%\marcmarg{Right! Corrected}
Our goal is to minimize the \emph{expected regret of $\Pi$}, which is minimizing
\[
\mathbb{E}_{p\sim P}\left[\sum\nolimits_jw_jC_j^{\Pi}(p)-\opt{p}\right]= \cost_I(\Pi)- \mathbb{E}_{p\sim P}\left[\opt{p}\right]\,.
\]
In other words, we effectively seek a policy $\Pi$ minimizing $\cost_I(\Pi)$. Since the processing times are independent across jobs, it is clear (for the single machine setting) that it suffices to consider so-called static list scheduling policies~\cite{Pinedo}, meaning that jobs are scheduled in the order of a fixed priority list which is determined ex ante, using the given information about processing time distributions $\p_j$. We will therefore simply refer to a scheduling policy as (scheduling) algorithm. If the distributions $\p_j$ are known, \texttt{WSEPT}, that is
scheduling in order of non-increasing ratios $w_j/\E{\p_j}$ is the policy that minimizes the expectation of the total weighted completion times~\cite{Rothkopf66}.

%More generally, for the single machine scheduling problem to minimize $\sum_j w_j\E{C_j}$, by independence of $\p_j$, one can or possibly randomizations thereof (e.g.\ to capture that the order of certain jobs remains undetermined).

For convenience let us write $\E{}$ instead of  $\mathbb{E}_{p\sim P}[\,\,]$, if no ambiguity arises. This paper addresses the information regime where the random variables $\p_j$ are unknown, hence also \E{\p_j} is unknown. The  only information available to the scheduler is one sample $p'_j\sim\p_j$, for all jobs $j=1,\dots,n$. In this regime, the only reasonable algorithm that exploits the given information is -arguably- to take the sample $p'_j$ as a proxy for $\E{\p_j}$.
\begin{definition}
    Algorithm \alg\ schedules jobs in non-increasing order weight over sampled processing time $w_j/p'_j$. 
\end{definition}

Our goal is to analyze this algorithm's expected cost, resp.\ expected regret. The following simple example gives a malicious input instance showing that, in general, the expected cost as well as expected regret achieved by \alg\ can be unbounded. 
\begin{example}
\label{ex:wouter_example}
Consider instance $I$ with $n-1$ jobs with deterministic processing time $\eps>0$ and job $n$ with processing time 
\[
P_n=\begin{cases}
   0 &  \text{ with probability } 1-\frac{1}{M}\,,\\
   M^2 & \text{ with probability } \frac{1}{M}\,,
\end{cases}
\]
where $M$ is large. Let $w_j=1$ for all jobs. %\puckmarg{Should we credit Wouter here?}\marcmarg{not here, in the beginning. Done.}
\end{example}
For the following discussion let $n$ be fixed and consider $\eps\to 0$. As $\E{P_n}=M$, the algorithm that minimizes $\E{\sum_jC_j}$ is to schedule all $\eps$-jobs first, with expected cost $M$ and expected regret equal to $(1-{1}/{M})(n-1)\eps\to 0$.
%, because $\E{\texttt{WSPT(p)}}=\Theta(M)$, too. 
As to \alg, observe that with probability $1-1/M$ the sample for job $n$ will be $p_n'=0$, which yields \alg\ to schedule job $n$ before all $\eps$-jobs. In this case, the expected cost of $\alg$ is $nM$, while $\opt{p}$ has expected cost $M$. With probability only $1/M$, the sample $p_n'=M^2$ and $\alg$ schedules job $n$ last, and just like $\opt{p}$ achieves expected cost $M$. Therefore, \alg's expected cost equals 
\begin{align*}
\cost_I(\alg)& =\mathbb{E}_{p'\sim P}\mathbb{E}_{p\sim P}\left[\sum\nolimits_jw_jC_j^{\alg(p')}(p)\right]\\& = \left(1-\frac{1}{M}\right)nM+\frac{1}{M}M=nM-(n-1)\ =\ \Theta(nM)\,,
\end{align*}
and the expected 
regret over the offline optimum $\opt{p}$ is of the same order of magnitude, as 
\begin{align*}
\text{expected regret $\alg$}&=\mathbb{E}_{p'\sim P}\mathbb{E}_{p\sim P}\left[\sum\nolimits_jw_jC_j^{\mbox{$\alg(p')$}}(p)-\opt{p}\right]\\
&=\left(1-\frac{1}{M}\right)\left[nM-M\right]\ =\ (n-1)(M-1)
\ =\ \Theta(nM)\,.
\end{align*}
Observe that the expectations are over the samples $p'\sim P$ to define \alg, and over the actual realizations of processing times $p\sim P$. If there is no ambiguity, we will omit the subscripts.

Example~\ref{ex:wouter_example} also shows more, however. If we make the assumption %that is standard in stochastic scheduling, namely 
that the adversary should be required to be non-anticipatory as in~\cite{MSU99}, then the omniscient adversary $\opt{p}$ is ruled out. The fundamental question is which is the right adversary to compare with, under the given information regime. In lack of reasonable alternatives, the least to ask for is an algorithm that compares favourably to an adversary that does \emph{not} make use of the given processing time samples. In other words, we ask for an algorithm that is at least as good as the adversary $\rand$ that picks \emph{any schedule uniformly at random}. 
\begin{definition}
    Algorithm \rand\ schedules jobs in any of the $n!$ sequences uniformly at random. 
\end{definition}
We argue below that the choice of this adversary has several advantages. However, the bad news is that Example~\ref{ex:wouter_example} is built so that algorithm \alg\ has large expected regret even in comparison to the much weaker adversary $\rand$, because 
\begin{align*}
&\mathbb{E}
%_{p'\sim P}\mathbb{E}_{p\sim P}
\left[\sum\nolimits_jw_jC_j^{\mbox{$\alg(p')$}}(p)-\rand(p)\right]
&=\left(1-\frac{1}{M}\right)\left[nM-\frac{n+1}{2}M\right]+ \frac{1}{M}\left[M-\frac{n+1}{2}M\right]\\
&=\frac12(n-1)(M-2)\ =\ \Theta(nM)\,.
\end{align*}

%the expected regret relative to the uniformly random schedule is $\approx \frac{(n-1)M}{2}$, which is still $\Theta((n-1)M)$.

To formalize the question that we aim to answer, let us define a partial order on the set of algorithms as follows.
\begin{definition}
    If $\inst$ is a set of problem instances, and $\Pi$ and $\Pi'$ are two scheduling algorithms,
    %$($\,that are well defined on $\inst$\,$)$, 
    write $\Pi\le_\inst \Pi'$ if $\cost_I(\Pi)\le \cost_I(\Pi')$ for every problem instance $I\in\inst$.
\end{definition}
Example~\ref{ex:wouter_example} then shows that not even  $\alg\le_\inst\rand$ holds, if $\inst$ contains all problem instances. We mainly intend to give an answer to the following question about the potential benefit of having access to samples $p'\sim\p$.
\[
    \boxed{\text{Are there conditions on the input instances $\inst$ under which  $\alg\le_\inst\rand$?}}
\]

\section{Uniform Randomization as Adversary}
\label{sec:rand}
The adversary $\rand$ seems logical for Example~\ref{ex:wouter_example}, because all job weights $w_j$ are the same. %Indeed, when showing $\alg \leq_{\inst} \rand$, we have successfully leveraged the fact that the algorithm $\alg$ exploits the samples. 
In general, one might define other adversaries that take into account the job weights. We argue that even in the presence of job weights the uniform random schedule $\rand$ is favourable.

The motivation lies in the fact that the expected cost achieved by $\rand$ is the average of the expected cost of the cheapest schedule and the expected cost of the most expensive schedule for any problem instance. In this relative sense, the expected cost of $\rand$ is independent of the considered problem instance. This will be made rigorous in Section \ref{sec:rog}. Comparing favourably against this instance independent benchmark shows that we have successfully leveraged the fact that we have access to samples.

Other, seemingly reasonable adversaries, e.g.\ the algorithm that schedules jobs by largest weight first, or randomized variations thereof, depend non-trivially on the problem instance. Since we do not have access to the $P_j$'s, however, showing that \alg\ outperforms such an adversary would generally not allow to distinguish between \alg\ performing well, or the adversary performing poorly. The latter is ruled out when considering \rand\ as a benchmark.

\section{Relative Optimality Gap and Better Than Random Schedules}
\label{sec:rog}

The partial order on algorithms $\leq_{\inst}$ defined in Section \ref{sec:prelim} has the shortcoming that it does not give a quantitative measure of how well an algorithm performs. In this section, we therefore define a scaled version of the usual notion for approximation ratios.

For a given problem instance $I$, denote by $L_I$ the lowest possible expected cost achieved by an optimal sequence, and by $H_I$ the highest possible expected cost (achieved by the reverse sequence). For simplicity, let us exclude for the rest of the paper instances $I$ where $L_I=H_I$, in which case trivially $\cost_I(\alg)=\cost_I(\rand)=L_I=H_I$.
One might consider to prove that \alg\ (or any other algorithm) is an $\alpha$-approximation algorithm for some $\alpha>1$, meaning $\cost_I(\alg)\le\alpha L_I$, for all  $I\in\inst$ and given $\inst$.  However, depending on the problem instance it might be that $\alpha L_I \ge H_I$, trivializing such a statement. So, if we want to assess if using the processing time samples improves upon scheduling at random, showing that an algorithm is an $\alpha$-approximation algorithm does not necessarily yield an answer for all instances. This motivates the following definition.
\begin{definition}
    The relative optimality gap $($\rog$)$ of an algorithm $\Pi$ on a problem instance $I\in \inst$ is defined as
    \begin{align*}
    \rog_I(\Pi):= 
    %\begin{cases}
        \frac{\cost_I(\Pi) - L_I}{H_I - L_I}\,. 
        %& \text{ when } H(I) > L(I),\\
        %0 & \text{ otherwise\,. }
    %\end{cases}
    \end{align*}
    For a set of problem instances $\inst$, define $\rog_\inst(\Pi) :=\sup_{I\in\inst}\rog_I(\Pi)$.
\end{definition}
Note that $\rog_\inst(\Pi) \le \alpha$, for some $\alpha\le 1$,  means that there is no instance $I$ where $\Pi's$ expected cost exceeds that of an optimal solution by more than ${\alpha}(H_I-L_I)$. It corresponds to a $\beta$-approximation algorithm, with $\beta=\beta(I)=1+{\alpha}(H_I/L_I-1)$. Roughly speaking, the $\rog$ is more meaningful than the notion of approximation ratios whenever $H_I-L_I$ is small, relative to $L_I$. (However this argument works vice versa, too.) One advantage of working with the $\rog$ is that it yields a meaningful statement for every single instance.  
Moreover, if $\rog_\inst(\Pi) \le \alpha$, it implies that the expected regret of $\Pi$ is at most an $\alpha$-fraction of the maximum expected regret. In that sense, the $\rog$ can be also seen as a ``regret version'' of the usual notion of approximation ratios.
%showing an algorithm has at most an $\alpha$-fraction of the maximum regret.
%It is more conservative than a 2-approximation when $H(I)$ is small ($H(I)< 3L(I)$), but less conservative otherwise. In particular, $\rog_\inst(\Pi) \le 1/2$  is a nontrivial statement also if $H(I) \le 2L(I)$, while a 2-approximation isn't.

As long as the set of input instances $\inst$ does include instances such as Example~\ref{ex:wouter_example}, we have that (again, considering $\eps\to 0)$
\[
\rog_\inst(\alg)\ge \frac{nM-(n-1) -M}{nM-M}\to 1\ (\text{for $M\to\infty$}\ \text{and any $n\ge 2$})\,.
\]

The following lemma defines the benchmark that we aim for in our subsequent analysis of algorithm $\alg$.
\begin{lemma}\label{lem:one_half}
    For any instance $I$ we have $\rog_I(\textup{\rand})=1/2$. Moreover, for scheduling algorithm $\Pi$ and any set of input instances $\inst$,  we have 
    $\Pi \le_\inst \rand \Leftrightarrow \rog_\inst(\Pi)\le \rog_\inst(\rand)$.
    %, except for the trivial case where $H(I)=L(I)$ and $\rog_I(\textup{\rand})=0$.
\end{lemma}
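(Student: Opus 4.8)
The plan is to reduce both statements to the single identity $\cost_I(\rand)=(L_I+H_I)/2$: the uniformly random schedule has expected cost exactly the average of the cheapest and the most expensive sequence. To get there, first I would fix a sequence $\sigma$, write $k\prec_\sigma j$ for ``$k$ precedes $j$ in $\sigma$'', and expand $C_j^\sigma(p)=p_j+\sum_{k\prec_\sigma j}p_k$; taking expectations over $p\sim P$ and using linearity,
\[
\cost_I(\sigma)=\sum_j w_j\E{\p_j}+\sum_j w_j\sum_{k\prec_\sigma j}\E{\p_k}\,.
\]
The first sum does not depend on $\sigma$; call it $a$. In the double sum each unordered pair $\{j,k\}$ contributes exactly one of $w_j\E{\p_k}$ or $w_k\E{\p_j}$, and in the reversed sequence $\bar\sigma$ it contributes the other one. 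Hence, setting $c:=2a+\sum_{j\ne k}w_j\E{\p_k}$, we obtain the pairing identity $\cost_I(\sigma)+\cost_I(\bar\sigma)=c$ for every sequence $\sigma$.

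Next I would pin down the constant $c$. Let $\sigma_L$ attain $\cost_I(\sigma_L)=L_I$; then $\cost_I(\bar\sigma_L)=c-L_I$, and since no sequence costs more than $H_I$, this gives $c\le L_I+H_I$. Symmetrically, with $\sigma_H$ attaining $\cost_I(\sigma_H)=H_I$ we get $c-H_I=\cost_I(\bar\sigma_H)\ge L_I$, so $c\ge L_I+H_I$; hence $c=L_I+H_I$. Since $\rand$ picks each of the $n!$ sequences with probability $1/n!$ and $\sigma\mapsto\bar\sigma$ is a bijection on these sequences,
\[
\cost_I(\rand)=\frac{1}{n!}\sum_\sigma\cost_I(\sigma)=\frac{1}{2\,n!}\sum_\sigma\bigl(\cost_I(\sigma)+\cost_I(\bar\sigma)\bigr)=\frac{c}{2}=\frac{L_I+H_I}{2}\,,
\]
and plugging this into the definition gives $\rog_I(\rand)=\bigl((L_I+H_I)/2-L_I\bigr)/(H_I-L_I)=1/2$.

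For the equivalence I would first note $\rog_\inst(\rand)=\sup_{I\in\inst}\rog_I(\rand)=1/2$ by the previous part. Because instances with $L_I=H_I$ have been excluded, $H_I-L_I>0$ for every $I\in\inst$, so for each such $I$ one has the chain of equivalences $\rog_I(\Pi)\le 1/2\iff\cost_I(\Pi)-L_I\le\frac12(H_I-L_I)\iff\cost_I(\Pi)\le(L_I+H_I)/2=\cost_I(\rand)$. Requiring this for every $I\in\inst$ turns the statement ``$\cost_I(\Pi)\le\cost_I(\rand)$ for all $I$'', i.e.\ $\Pi\le_\inst\rand$, into ``$\rog_I(\Pi)\le 1/2$ for all $I$'', i.e.\ $\rog_\inst(\Pi)=\sup_I\rog_I(\Pi)\le 1/2=\rog_\inst(\rand)$, and conversely; this is exactly the claimed equivalence.

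I expect the only real content to be the pairing identity together with the fact that $L_I$ and $H_I$ fix the constant $c$; everything after that is bookkeeping about suprema and the definition of $\rog$. The single place where the standing assumption $L_I\ne H_I$ is used is to keep the denominator $H_I-L_I$ of $\rog_I$ strictly positive, which is what makes the final chain of equivalences valid.
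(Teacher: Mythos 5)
Your proof is correct, and it reaches the same pivotal identity as the paper, namely $\cost_I(\rand)=\tfrac12(L_I+H_I)$, but by a genuinely different decomposition. The paper argues \emph{per pair of jobs}: under \rand\ each pair $j,k$ is ordered either way with probability $1/2$, so after cancelling the diagonal terms, $\cost_I(\rand)-L_I=\sum_{j<k}\tfrac12\bigl(w_j\E{\p_k}-w_k\E{\p_j}\bigr)=\tfrac12(H_I-L_I)$; this computation doubles as the setup for the $\diff_{jk}$-machinery of Lemma~\ref{lem:delta} and Theorem~\ref{thm:kappa}. You instead argue \emph{per pair of sequences}: the reversal involution $\sigma\mapsto\bar\sigma$ gives the pairing identity $\cost_I(\sigma)+\cost_I(\bar\sigma)=c$, the extremal sequences pin down $c=L_I+H_I$, and averaging over the $n!$ sequences yields $\cost_I(\rand)=c/2$. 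Your route is slightly more self-contained --- it needs neither the explicit WSEPT formula for $L_I$ nor the pairwise inversion probabilities --- and it has the pleasant by-product of actually \emph{proving} the paper's parenthetical claim that $H_I$ is attained by the reverse of an optimal sequence, which the paper asserts without argument when defining $H_I$. What it buys less of is reusability: the paper's pairwise bookkeeping is exactly what gets recycled in the subsequent lemmas. The second half of your argument, the chain of equivalences $\rog_I(\Pi)\le\tfrac12\iff\cost_I(\Pi)\le\tfrac12(L_I+H_I)=\cost_I(\rand)$ using the standing assumption $H_I-L_I>0$, coincides with the paper's.
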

\begin{proof}
   For given instance $I=(w,\p)$ assume w.l.o.g.\ that  $w_1/\E{\p_1}\ge \cdots \ge w_n/\E{\p_n}$. The first claim follows because $L_I=\sum_{j=1}^{n}\sum_{k=j}^n w_k\E{\p_j}$, and   
   any two jobs $j,k$ have probability $1/2$ of being ordered either way by $\rand$, so by canceling the terms $w_j\E{p_j}$
   \begin{align*}
       \cost_I(\rand) - L_I&= 
			 \sum_{j=1}^{n-1} \sum_{k=j+1}^{n} \frac{1}{2} w_j \mathbb{E}(P_k) - \frac{1}{2} w_k \mathbb{E}(P_j)
			%&= \frac{1}{2} \sum_{j=1}^{N-1} \sum_{k=j+1}^{N} w_j \mathbb{E}(P_k) - w_k \mathbb{E}(P_j)\\
			= \frac{1}{2} \left(H_I-L_I\right).
   \end{align*}The second claim then follows directly by using the definition of $\rog$, since $\rog_I(\Pi)\le \frac12$ precisely means that $\cost_I(\Pi)\le\frac12(L_I+H_I)=\cost_I(\rand)$.
\end{proof}
That means that we may proceed to find conditions on $\inst$ under which $\rog_\inst(\alg)\le 1/2$. Note that the second statement in Lemma~\ref{lem:one_half} is not true when comparing two arbitrary scheduling algorithms $\Pi$ and $\Pi'$, as $\rog_\inst(\Pi)\le\rog_\inst(\Pi')$ does not  imply that $\Pi \le_\inst \Pi'$.

For what follows it is convenient to realize that the relative optimality gap %$\rog$ 
may be expressed in terms of the extra cost for scheduling pairs of jobs in the incorrect order. To that end, for a given instance $I=(w,\p)$, we denote this extra cost by
\[
\diff_{jk} := w_j\E{\p_k}-w_j\E{\p_j},
\]
and write 
$\prob{\Pi:j\to k}$ for the probability that $\Pi$ schedules job $j$ before $k$. 
\begin{lemma}\label{lem:delta}
Let $\Pi$ be a scheduling algorithm for instance $I=(w,\p)$, and assume w.l.o.g.\ that $w_1/\E{\p_1}\ge \cdots \ge w_n/\E{\p_n}$, then
\begin{align}\label{eq:Delta-formula-rog}
\rog_{I}(\Pi) =
\sum_{j,k: j<k} \prob{\Pi:k\to j} \frac{\diff_{jk}}{\sum_{h,\ell: h<\ell} \diff_{h\ell}}\,.
\end{align}
\end{lemma}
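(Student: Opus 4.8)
The plan is to use a pairwise decomposition of the expected cost. Fix the instance $I=(w,\p)$ with jobs indexed so that $w_1/\E{\p_1}\ge\cdots\ge w_n/\E{\p_n}$. Since $\Pi$ is a static list‑scheduling policy and the $\p_j$ are independent, for any realized permutation $\sigma$ the expected completion time of job $j$ is $\sum_{k\,:\,k\preceq_\sigma j}\E{\p_k}$, where $k\preceq_\sigma j$ means $k$ is scheduled at or before $j$. Hence, writing $\cost(\sigma)$ for the expected weighted completion time of permutation $\sigma$,
\[
\cost(\sigma)=\sum_j w_j\E{\p_j}+\sum_{\{j,k\}}\Big(\mathbf{1}[j\prec_\sigma k]\,w_k\E{\p_j}+\mathbf{1}[k\prec_\sigma j]\,w_j\E{\p_k}\Big),
\]
where the second sum runs over unordered pairs. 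The term $\sum_j w_j\E{\p_j}$ is common to all permutations, and each unordered pair $\{j,k\}$ contributes — independently of every other pair — either $w_k\E{\p_j}$ or $w_j\E{\p_k}$ according only to the relative order of $j$ and $k$.

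Next I would take expectations over the randomness of $\Pi$ by linearity, and for a pair $j<k$ split the contribution according to which of $j,k$ comes first:
\[
\cost_I(\Pi)=\sum_j w_j\E{\p_j}+\sum_{j<k}\Big(\prob{\Pi:j\to k}\,w_k\E{\p_j}+\prob{\Pi:k\to j}\,w_j\E{\p_k}\Big)\,.
\]
Specializing to the two extreme policies: in the \texttt{WSEPT} order every pair $j<k$ has $j$ before $k$, so $L_I=\sum_j w_j\E{\p_j}+\sum_{j<k}w_k\E{\p_j}$, while in the reverse order every pair is inverted, so $H_I=\sum_j w_j\E{\p_j}+\sum_{j<k}w_j\E{\p_k}$. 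That these are indeed the cost‑minimizing and cost‑maximizing orders is the standard exchange argument (Smith/Rothkopf), which I would either cite or re‑derive from the pairwise form using that for $j<k$ the \texttt{WSEPT}-consistent choice $w_k\E{\p_j}$ is the smaller of the two pair contributions. In particular $H_I-L_I=\sum_{j<k}\big(w_j\E{\p_k}-w_k\E{\p_j}\big)=\sum_{h<\ell}\diff_{h\ell}$, with $\diff_{h\ell}=w_h\E{\p_\ell}-w_\ell\E{\p_h}\ge 0$ the extra cost of inverting the pair $\{h,\ell\}$.

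Finally I would subtract: using $\prob{\Pi:j\to k}=1-\prob{\Pi:k\to j}$ and cancelling the common $w_k\E{\p_j}$ terms pair by pair,
\[
\cost_I(\Pi)-L_I=\sum_{j<k}\prob{\Pi:k\to j}\big(w_j\E{\p_k}-w_k\E{\p_j}\big)=\sum_{j<k}\prob{\Pi:k\to j}\,\diff_{jk}\,.
\]
Dividing by $H_I-L_I=\sum_{h<\ell}\diff_{h\ell}$ (positive since we excluded the degenerate case $L_I=H_I$) yields \eqref{eq:Delta-formula-rog}. As sanity checks, the reverse order has all $\prob{\cdot:k\to j}=1$ and gives $\rog_I=1$, and $\rand$ inverts each pair with probability $\tfrac12$ and gives $\rog_I=\tfrac12$, recovering Lemma~\ref{lem:one_half}.

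The main point to get right is the bookkeeping in the pairwise decomposition — specifically, that for a static list policy only the pairwise relative orders matter and that the per‑pair contributions add with no cross terms; once that is established the remainder is algebra and an application of linearity of expectation. The only other thing deserving a word is the appeal to the exchange argument for identifying $L_I$ and $H_I$ with the \texttt{WSEPT} order and its reverse, so that the denominator is literally $\sum_{h<\ell}\diff_{h\ell}$ with all summands nonnegative; this is already implicit in the proof of Lemma~\ref{lem:one_half}.
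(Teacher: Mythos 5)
Your proof is correct and follows essentially the same route as the paper: a pairwise decomposition of the expected cost, identification of $L_I$ and $H_I$ with the \texttt{WSEPT} order and its reverse, and cancellation of the common terms to get $\cost_I(\Pi)-L_I=\sum_{j<k}\prob{\Pi:k\to j}\,\diff_{jk}$ and $H_I-L_I=\sum_{h<\ell}\diff_{h\ell}$. The only difference is that you spell out the bookkeeping (and the exchange argument for $L_I$, $H_I$) more explicitly than the paper does; note also that your reading of $\diff_{jk}$ as $w_j\E{\p_k}-w_k\E{\p_j}$ is the intended one, consistent with the paper's own computation.
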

\begin{proof}
   The proof follows from observing that $H_I-L_I= \sum_{h,\ell: h<\ell} \diff_{h\ell}$, and since, again by canceling the terms $w_j\E{\p_j}$, 
\begin{align*}
\cost_I(\Pi) - L_I &=  \sum_{j=1}^{n-1} \sum_{k=j+1}^{n} \left(\prob{\Pi:j\to k} - 1\right)w_k \E{\p_j} + \prob{\Pi:k\to j}w_j \E{\p_k}\\
&= \sum_{j=1}^{n-1} \sum_{k=j+1}^{n} \prob{\Pi:k\to j} \diff_{jk}\,.
\end{align*}
\end{proof}
Now we can derive a simple and intuitive bound on the $\rog$ by bounding the probability of scheduling jobs in the incorrect order.
\begin{theorem}\label{thm:kappa}
    If $\Pi$ is an algorithm for instance $I$, if there exists some  $\kappa \le 1$, so that for all pairs of jobs $\prob{\Pi:j\to k} \ge \kappa$ whenever $w_j/\E{p_j} > w_k/\E{p_k}$, then 
    $
    \rog_I(\Pi) \le {1-\kappa}\,. 
    $ 
    %\textcolor{red}{This should be whenever $w_j/\E{p_j}> w_k/\E{p_k}$, so with a strict inequality, I think.}
\end{theorem}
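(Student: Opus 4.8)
The plan is to apply Lemma~\ref{lem:delta} directly. Assume w.l.o.g.\ that $w_1/\E{\p_1}\ge\cdots\ge w_n/\E{\p_n}$, so that for every pair $j<k$ we are in the ``good'' case, i.e.\ scheduling $j$ before $k$ is (weakly) correct. The hypothesis gives $\prob{\Pi:j\to k}\ge\kappa$ whenever the ratio inequality is strict; the only subtlety is pairs with $w_j/\E{\p_j}=w_k/\E{\p_k}$, but for those $\diff_{jk}= w_j\E{\p_k}-w_j\E{\p_j}$ contributes to the denominator exactly as for a correctly ordered pair, and in fact one checks $\diff_{jk}\ge 0$ for all $j<k$ under the assumed ordering, so these terms only help. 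For the strict pairs, $\prob{\Pi:j\to k}\ge\kappa$ is equivalent to $\prob{\Pi:k\to j}\le 1-\kappa$.

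Carrying this out: starting from \eqref{eq:Delta-formula-rog},
\[
\rog_I(\Pi)=\sum_{j<k}\prob{\Pi:k\to j}\,\frac{\diff_{jk}}{\sum_{h<\ell}\diff_{h\ell}}
\le (1-\kappa)\sum_{j<k}\frac{\diff_{jk}}{\sum_{h<\ell}\diff_{h\ell}}
=1-\kappa,
\]
using $\prob{\Pi:k\to j}\le 1-\kappa$ termwise and the fact that $\diff_{jk}\ge 0$, so that the weights $\diff_{jk}/\sum_{h<\ell}\diff_{h\ell}$ are nonnegative and sum to one. (If some $\diff_{jk}=0$ the corresponding term vanishes regardless of the probability, so the bound is unaffected; we have excluded $L_I=H_I$, so the denominator is positive.)

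I expect no real obstacle here — the result is essentially a one-line consequence of the formula in Lemma~\ref{lem:delta} once one observes that the $\diff_{jk}$ form a nonnegative ``probability-like'' weighting of the pairwise error probabilities. The only point requiring a sentence of care is the treatment of ties $w_j/\E{\p_j}=w_k/\E{\p_k}$, where the hypothesis says nothing about $\prob{\Pi:j\to k}$; but since $\diff_{jk}=0$ in that case, such pairs contribute nothing to $\rog_I(\Pi)$ and can be ignored. A remark worth adding after the proof: combined with Lemma~\ref{lem:one_half}, this shows that $\kappa\ge 1/2$ for all strict pairs on a class $\inst$ suffices to conclude $\alg\le_\inst\rand$ — which is presumably the template for the three positive results promised in the introduction.
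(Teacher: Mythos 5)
Your proof is correct and is essentially identical to the paper's own one-line argument: apply Lemma~\ref{lem:delta}, bound $\prob{\Pi:k\to j}\le 1-\kappa$ termwise, and note that tied pairs have $\diff_{jk}=0$ so the hypothesis need not cover them. The extra care you take (nonnegativity of the weights $\diff_{jk}$, which requires reading $\diff_{jk}$ as $w_j\E{\p_k}-w_k\E{\p_j}$ as in the proof of Lemma~\ref{lem:delta}, and positivity of the denominator from excluding $L_I=H_I$) is exactly what the paper leaves implicit.
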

\begin{proof}
    The proof follows from Lemma~\ref{lem:delta} and $\prob{\Pi: k\to j} \le (1-\kappa)$, and because $\diff_{jk}=0$ whenever $w_j/\E{p_j} = w_k/\E{p_k}$.
\end{proof}
One way of interpreting this result, in view of the second part of Lemma~\ref{lem:one_half}, is  that $\rog\le\alpha$ means that the  expected regret is no worse than the expected regret of an algorithm that schedules each pair of jobs in the correct order with probability $1-\alpha$. Also note that the inverse statement does not hold, namely,  $\rog\le\alpha$ does not necessarily imply that all pairs of jobs are scheduled in the correct order with probability at least $1-\alpha$.

%\puck{Maybe we can add two or so lines here stating something like: Conversely, we can take this theorem to mean that an ROG-bound of $\alpha \leq 1$ can be interpreted as saying that the obtained expected regret is as good as or better than the expected regret for the algorithm that schedules each pairwise correctly with probability $\alpha-1$.}
%\marc{Well, maybe, but then not necessarily here but rather after Lemma~\ref{lem:one_half}?} 
%\puck{Lemma 6 illustrates this observation for the case $\alpha=1/2$, but the kappa theorem also shows it for other $\alpha$. So in my mind, we can either put this observation right before the kappa-theorem, or right after.}
%\marc{But the  Theorem does only hold in one direction, doesn't it}

\section{Well Behaved Input Distributions}
\label{sec:results}

Here we derive our main results, namely three classes of rather natural assumptions on input distributions that allow to make use of Theorem~\ref{thm:kappa} to obtain performance guarantees for sampling based algorithm \alg. We will argue in Section~\ref{sec:final} that these results are tight in a mild sense.
An assumption that we make henceforth is that the distributions for $\p_j$ all have a density~$f_j$. The same ideas and results also work for discrete distributions, however. 

If $p'_j$,$p'_k$ are the sampled processing times, then  $\alg$ schedules $j$ before $k$ if $p'_j\le \frac{w_j}{w_k}p'_k$. By independence of processing times we have the following. 
\begin{observation}
    Consider scheduling algorithm $\alg$, then the probability for scheduling two jobs $j,k$ in order $j\to k$ is
    \begin{equation}\label{eq:formula_j_k}
    \prob{\alg: j\to k} = \int_{0}^{\infty}  f_k(y) \int_{0}^{\frac{w_j}{w_k}y} f_j(x)dxdy\,.
   \end{equation}
\end{observation}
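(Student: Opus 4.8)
The plan is to translate the ordering rule of $\alg$ into an event about the two independent samples $p'_j$ and $p'_k$, and then compute the probability of that event by integrating the joint density over the corresponding region of the plane.

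First I would recall that $\alg$ lists jobs in non-increasing order of $w_j/p'_j$, so it places $j$ before $k$ precisely when $w_j/p'_j \ge w_k/p'_k$. Since all weights and all (sampled) processing times are positive, cross-multiplying turns this inequality into $p'_j \le \frac{w_j}{w_k}p'_k$, exactly as already observed in the sentence preceding the statement. Hence
\[
\prob{\alg: j\to k} \;=\; \prob{p'_j \le \tfrac{w_j}{w_k}\,p'_k}\,,
\]
where $p'_j\sim f_j$ and $p'_k\sim f_k$.

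Next I would use independence of the processing times across jobs: the pair $(p'_j,p'_k)$ has joint density $(x,y)\mapsto f_j(x)f_k(y)$ on $[0,\infty)^2$, and the event $\{p'_j \le \frac{w_j}{w_k}p'_k\}$ is the measurable region $\{(x,y): 0\le x\le \frac{w_j}{w_k}y,\ y\ge 0\}$. Integrating the density over this region and writing the double integral as an iterated integral (legitimate since the integrand is non-negative, by Tonelli) yields exactly~\eqref{eq:formula_j_k}.

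I do not expect a genuine obstacle here; the only two points that deserve a sentence are (i) that a tie $w_j/p'_j = w_k/p'_k$ occurs with probability zero because $f_j$ and $f_k$ are densities, so the strict-versus-weak inequality in the ordering rule is immaterial and the computation is well defined; and (ii) the interchange of the order of integration, justified by non-negativity of $f_jf_k$. For discrete distributions the identical argument goes through with sums in place of integrals once any fixed deterministic tie-breaking rule is adopted, which is why the remark that ``the same ideas and results also work for discrete distributions'' applies verbatim.
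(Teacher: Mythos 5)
Your proposal is correct and matches the paper's (implicit, one-line) justification exactly: translate the ordering rule into the event $p'_j \le \frac{w_j}{w_k}p'_k$ and integrate the product density $f_j(x)f_k(y)$ over the corresponding region, using independence. The extra remarks on ties having probability zero and on Tonelli are fine but not needed beyond what the paper already states.
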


\subsection{Symmetric Processing Time Distributions}
Intuitively speaking, if all distributions $\p_j$ are symmetric around their means $\E{\p_j}$, then for each sampled pair $p'_j,p'_k$ that gives rise to an incorrect ordering of $j$ and $k$, by symmetry, there exist samples $p_j'',p_k''$ that appear with equal probability and give rise to the correct ordering. The proof of the following theorem confirms that this intuition is essentially correct.
Let us first recall what symmetry of non-negative random variables means.
\begin{definition}
    Consider a random variable $\p$ on $\mathbb{R}$ with density $f$ and positive and finite expected value $E$, then $\p$ is symmetric if $f(E-x)=f(E+x)$ for all $x$. 
\end{definition}
Note that in our context $\p\ge 0$, which implies that $f(x)=0$ for all $x\le 0$ and all $x\ge 2E$. 
\begin{theorem}
    Consider instances $\inst$ so that all processing time distributions $\p_j$ are symmetric for all jobs $j=1,\dots,n$, then $\alg\le_\inst\rand$.
\end{theorem}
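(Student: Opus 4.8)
The plan is to show that for every pair of jobs $j,k$ with $w_j/\E{\p_j} > w_k/\E{\p_k}$, algorithm $\alg$ schedules them in the correct order $j\to k$ with probability at least $1/2$, and then invoke Theorem~\ref{thm:kappa} with $\kappa=1/2$ to conclude $\rog_I(\alg)\le 1/2$, which by Lemma~\ref{lem:one_half} is exactly $\alg\le_\inst\rand$. So the whole argument reduces to a pairwise statement about two symmetric random variables.

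Fix such a pair and write $E_j=\E{\p_j}$, $E_k=\E{\p_k}$, and $r=w_j/w_k$; the correct-order condition $w_j/E_j > w_k/E_k$ is equivalent to $rE_k > E_j$. By the Observation, $\prob{\alg:j\to k}=\prob{\p'_j \le r\,\p'_k}$. The idea is to pair up each realization with its reflection: the map $(x,y)\mapsto(2E_j-x,\,2E_k-y)$ is measure-preserving on the support because of symmetry (densities satisfy $f_j(2E_j-x)=f_j(x)$, $f_k(2E_k-y)=f_k(y)$), and it is an involution. Under this map the event $\{x \le ry\}$ maps to $\{2E_j-x \le r(2E_k-y)\}$, i.e.\ $\{x \ge 2E_j - 2rE_k + ry\}$. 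So I would split the square into the ``anti-diagonal'' strip where both the event and its reflection hold (or both fail) and compare; the key observation is that since $2E_j - 2rE_k < 0 \le$ everything relevant (here $rE_k>E_j$ is used), the line $x = ry + (2E_j - 2rE_k)$ lies strictly below the line $x=ry$, so the reflected event $\{x \ge ry + (2E_j-2rE_k)\}$ strictly contains a set whose reflection is $\{x \le ry\}$; more precisely, $\prob{\p'_j\le r\p'_k} = \prob{\p'_j \ge r\p'_k + 2E_j - 2rE_k} \ge \prob{\p'_j \ge r\p'_k}$, because $2E_j - 2rE_k \le 0$ shifts the half-plane in the favourable direction. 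Combining $\prob{\p'_j \le r\p'_k} \ge \prob{\p'_j \ge r\p'_k}$ with $\prob{\p'_j \le r\p'_k} + \prob{\p'_j \ge r\p'_k}\ge 1$ gives $\prob{\alg:j\to k}=\prob{\p'_j\le r\p'_k}\ge 1/2$, as desired.

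Concretely, the cleanest way to write the middle step is: apply the change of variables $x\mapsto 2E_j-x$, $y\mapsto 2E_k-y$ inside the double integral~\eqref{eq:formula_j_k} to get
\[
\prob{\alg:j\to k} = \int_0^\infty f_k(y)\int_{r y + 2E_j - 2rE_k}^{\infty} f_j(x)\,dx\,dy,
\]
and since $2E_j - 2rE_k \le 0$, the lower limit of the inner integral is at most $ry$, hence this integral is at least $\int_0^\infty f_k(y)\int_{ry}^\infty f_j(x)\,dx\,dy = \prob{\p'_j \ge r\p'_k} = 1 - \prob{\alg:j\to k}$ (using that the boundary event $\{\p'_j = r\p'_k\}$ has probability zero under densities). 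Rearranging yields $\prob{\alg:j\to k}\ge 1/2$. For pairs with $w_j/E_j = w_k/E_k$ we have $\diff_{jk}=0$, so they contribute nothing to the $\rog$ by Lemma~\ref{lem:delta} and need no attention.

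I expect the main obstacle to be purely bookkeeping: being careful that the reflected half-plane's boundary is the line $x = ry + 2(E_j - rE_k)$ and correctly identifying the sign of the constant $2(E_j - rE_k)$ from the hypothesis $rE_k > E_j$, together with the harmless technicalities that the distributions live on $[0,2E_j]$ and $[0,2E_k]$ (so the reflection maps the support to itself) and that ties occur with probability zero. There is no real analytic difficulty — the symmetry does all the work — but the direction of the inequality must be tracked precisely, and one should state explicitly that the density assumption is only for convenience and that the same reflection argument goes through verbatim for symmetric discrete distributions.
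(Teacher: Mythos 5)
Your proposal is correct and follows essentially the same route as the paper: reduce to showing $\prob{\alg:j\to k}\ge 1/2$ for each correctly-ordered pair via Theorem~\ref{thm:kappa} and Lemma~\ref{lem:one_half}, then exploit symmetry by reflecting through the means. Your point reflection $(x,y)\mapsto(2E_j-x,2E_k-y)$ is exactly the paper's double mirroring of the triangle $T_{kj}$ in both coordinate axes (after centering at the means), just written as a direct half-plane containment instead of a comparison of triangular integration domains, and the sign condition $2(E_j-rE_k)\le 0$ you track is the paper's observation that $w_jE_k>w_kE_j$ makes $T_{kj}$ the smaller triangle.
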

\begin{proof}
    Pick an instance $I$ with symmetric processing time distributions, and recall by Theorem~\ref{thm:kappa} and Lemma~\ref{lem:one_half} that it suffices to prove that for any pair of jobs $j,k$ with $w_j/\E{\p_j} > w_k/\E{p_k}$, we have $\prob{\alg:j\to k}\ge \frac12$, since then $\rog_\inst(\alg)\le\frac12$ by Theorem~\ref{thm:kappa}. For convenience write $E_j$ and $E_k$ for $\E{\p_j}$ and $\E{\p_k}$.
    Making use of \eqref{eq:formula_j_k}, we apply linear variable substitutions. Substituting $x$ by $x'=w_k(x-E_j)$, $y$ by $y'=w_j(y-E_k)$, and noting that $dx'=w_k\,dx$ and $dy'= w_j\,dy$, we get
% \[
%     \prob{\alg: j\to k} = \frac{1}{w_k}\int_{0}^{\infty}  f_k(y) \int_{-w_kE_j}^{w_jy-w_kE_j} f_j(E_j+\frac{x'}{w_k})dx'dy\,.
%   \]
%  Analogously, we substitute $y$ by $y'=w_j(y-E_k)$, which yields
    \[
     \prob{\alg: j\to k} = \frac{1}{w_jw_k}\int_{-w_jE_k}^{\infty}  f_k(E_k + \frac{y'}{w_j}) \int_{-w_kE_j}^{y'+w_jE_k- w_kE_j} f_j(E_j+\frac{x'}{w_k})dx'dy'\,.
    \]
    From symmetry of $\p_k$ it follows that $f_k(E_k + \frac{y'}{w_j}) = 0$ for $y' > w_j E_k$. Thus, we can replace the upper boundary of the outer integral by $w_j E_k$. Renaming  $x'$ back to $x$ and $y'$ to $y$, gives
    \[
     \prob{\alg: j\to k} = \frac{1}{w_jw_k}\int_{-w_jE_k}^{w_jE_k}  f_k(E_k + \frac{y}{w_j}) \int_{-w_kE_j}^{y+w_jE_k- w_kE_j} f_j(E_j+\frac{x}{w_k})dxdy\,.
    \]
    And symmetrically also
     \[
     \prob{\alg: k\to j} = \frac{1}{w_jw_k}\int_{-w_kE_j}^{w_kE_j}  f_j(E_j + \frac{x}{w_k}) \int_{-w_jE_k}^{x+w_kE_j- w_jE_k} f_k(E_k+\frac{y}{w_j})dydx\,.
    \]
    %\marcmarg{Should we apply Fubini to change order of integration so one sees more directly how these two compare?}
    The final two expressions can be interpreted as surface integrals in $\mathbb{R}^2$. The integration domain in both cases is an isosceles right-angled triangle. These triangles are shown in Figure~\ref{fig:Qs_as_triangles_symmetric_case_flipping}. We denote the triangle corresponding to $\prob{\alg:j\to k}$ with $T_{jk}$, and analogously for $T_{kj}$. %Starting at the bottom left and going counter-clockwise, the triangle $T_{jk}$ has corners with coordinates $(-w_k E_j, -w_j E_k), (2w_jE_k - w_kE_j, w_j E_k), (-w_k E_j, w_j E_k)$ respectively. Similarly, $T_{kj}$ has corners with coordinates $(-w_k E_j, -w_j E_k), (w_k E_j, -w_jE_k), (w_k E_j, 2w_kE_j - w_jE_k)$.
\begin{figure}[h]
		\centering
		\begin{tikzpicture}[scale=0.7]
			% Background grid
			\draw[help lines, color=gray!30, dashed] (-4.9,-4.9) grid (4.9,4.9);
			% Axes
			\draw[->,ultra thick] (-5,0)--(5,0) node[right]{$x$};
			\draw[->,ultra thick] (0,-5)--(0,5) node[above]{$y$};
			
			% Blue triangle
			\draw [ultra thick, blue] (-2,-3)
			-- (4,3)
			-- (-2,3)
			-- cycle;
			
			% Green triangle dashed
			\draw [ultra thick, green, dashed] (-2,-3)
			-- (2,1)
			-- (2,-3)
			-- cycle;
			
			% Green triangle flipped
			\draw [ultra thick, green, dashed] (-2,-1)
			-- (2,3)
			-- (-2,3)
			-- cycle;
			
			% Green right angle symbol
			\draw [ultra thick, green] (1.8,-3)
			-- (1.8,-2.8)
			-- (2,-2.8);
			
			% Blue right angle symbol
			\draw [ultra thick, blue] (-2,2.8)
			-- (-1.8,2.8)
			-- (-1.8,3);
			
			% Flip arrow over x-axis
			\draw[->, ultra thick, green] (0.5,-2) -- (-0.5,-2);
			\draw[->, ultra thick, green] (-1,-0.5) -- (-1,0.5);
			
			% Triangle names
			\node [thick, blue] at (-1,2) {\Large $T_{jk}$};
			\node [thick, green] at (1,-2) {\Large $T_{kj}$};
			
			% Notes along axes
			\node[gray, anchor = south west] at (0,3) {$w_j E_k$};
			\draw[gray,fill=gray] (0,3) circle (.5ex);
			\node[gray, anchor = north west] at (0,-3) {$-w_j E_k$};
			\draw[gray,fill=gray] (0,-3) circle (.5ex);
			\node[gray, anchor = north west] at (2,0) {$w_k E_j$};
			\draw[gray,fill=gray] (2,0) circle (.5ex);
			\node[gray, anchor = north east] at (-2,0) {$-w_k E_j$};
			\draw[gray,fill=gray] (-2,0) circle (.5ex);
		\end{tikzpicture}
		\caption{Two triangles representing the integral domains for the probabilities of \alg\ scheduling in each pairwise order. 
  %By mirroring $T_{kj}$ once in each axis, it aligns with $T_{jk}$. 
  The diagonal strip where the triangles do not overlap corresponds to the difference in the integral domains, proving that $\prob{\alg:j\to k} \ge \prob{\alg:k\to j}$.}
		\label{fig:Qs_as_triangles_symmetric_case_flipping}
	\end{figure}
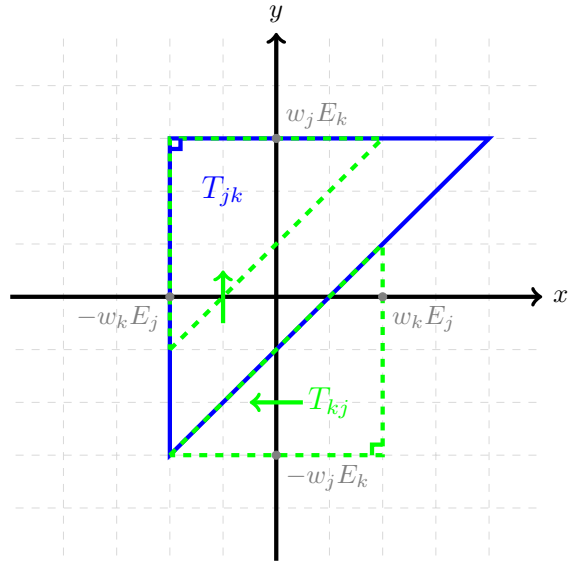
Due to the linear transformations, the symmetry of $\p_j$ and $\p_k$ translates into the fact that the integrals over these triangles remain the same after reflection through either of the coordinate axes. Mirroring the triangle $T_{kj}$  twice, once in the x-axis, and once in the y-axis, aligns its right-angle corner with the right-angle corner of the triangle $T_{jk}$. This is illustrated in Figure~\ref{fig:Qs_as_triangles_symmetric_case_flipping}. This geometric operation can also be done formally by changing the order of integration in the term for $ \prob{\alg: k\to j}$, using Fubini's theorem. It results in
\[
 \prob{\alg: k\to j} = \frac{1}{w_jw_k}\int_{w_jE_k-2w_kE_j}^{w_jE_k}  f_k(E_k + \frac{y}{w_j}) \int_{-w_kE_j}^{y- w_jE_k + w_kE_j} f_j(E_j+\frac{x}{w_k})dxdy\,.
\]
Geometrically, the assumption that $w_j E_k > w_k E_j$ yields that triangle $T_{kj}$ has shorter side-lengths than triangle $T_{jk}$. This means that the integration domains of the two otherwise identical integrals differ by the diagonal strip in Figure~\ref{fig:Qs_as_triangles_symmetric_case_flipping}, and since the integrand is non-negative, 
$\prob{\alg:j\to k}\ge \prob{\alg:k\to j}$. Therefore, as required, 
$\prob{\alg:j\to k}\ge \frac12$.
\end{proof}
Note that the proof also shows that whenever the difference between the terms $w_j/E_j$ and $w_k/E_k$ gets large, also  $\prob{\alg:j\to k}$ gets larger relative to $\prob{\alg:k\to j}$. However without knowledge about the precise processing time distributions, this does generally not yield qualitatively better bounds than $\rog\le 1/2$. We will again pick up this idea in Section~\ref{sec:far-apart}. 

\subsection{Identical Processing Time Distributions up to Scaling}
\label{sec:scaled}

Here we show that if all distributions $\p_j$ are identical up to scaling, then algorithm \alg\ performs better than random.
\begin{definition}
    Let $g(\cdot)$ be any probability density function over $\mathbb{R}_{\ge 0}$ with expectation~$1$. Define for $\lambda_j>0$, 
    $
    f_{j}(x):= \lambda_jg(\lambda_jx)\,.
    $
    Call an instance of the scheduling problem \emph{shape-uniform} if the processing time distributions $\p_j$ all have densities $f_j(\cdot)$, for the same $g(\cdot)$ and  $\lambda_j>0$, $j=1,\dots,n$.
\end{definition}
Note that $\E{\p_j}=1/\lambda_j$, $j=1,\dots,n$,  for any shape-uniform instance. Also note that exponentially distributed processing times where $f_j(x)=\lambda_je^{-\lambda_jx}$ are contained as a special case of shape-uniformity via $g(x)=e^{-x}$.

\begin{theorem}\label{thm:scaled}
    Consider instances $\inst$ so that all processing time distributions $\p_j$ are shape-uniform with the same underlying probability density function $g(\cdot)$, then 
    %$\rog_\inst(\alg)\le\frac12$. 
    $\alg \le_\inst\rand$.
    %\puckmarg{We should maybe word this in such a way as to emphasize that they are all shape-uniform with the same function g. If half are shape-uniform with g, and half with some function g', the result doesn't work. (If you think that's clear, it's fine to leave as is.)}
\end{theorem}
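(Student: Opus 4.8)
The plan is to reduce to a pairwise statement and then exploit shape-uniformity via a change of variables. By Theorem~\ref{thm:kappa} together with the second part of Lemma~\ref{lem:one_half}, it suffices to show that for every pair of jobs $j,k$ with $w_j/\E{\p_j} > w_k/\E{\p_k}$ one has $\prob{\alg:j\to k}\ge 1/2$; this yields $\rog_\inst(\alg)\le 1/2=\rog_\inst(\rand)$ and hence $\alg\le_\inst\rand$. Since the densities make ties a null event, $\{\alg:j\to k\}$ and $\{\alg:k\to j\}$ partition the sample space, so the target inequality is equivalent to $\prob{\alg:j\to k}\ge\prob{\alg:k\to j}$.

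Next I would rewrite \eqref{eq:formula_j_k} using $f_j(x)=\lambda_j g(\lambda_j x)$ and $\E{\p_j}=1/\lambda_j$, so that the hypothesis reads $w_j\lambda_j>w_k\lambda_k$. Substituting $v=\lambda_j x$ in the inner integral and $u=\lambda_k y$ in the outer one, the Jacobian factors cancel and one obtains
\[
\prob{\alg:j\to k}=\int_0^\infty g(u)\,G(cu)\,du,\qquad c:=\frac{w_j\lambda_j}{w_k\lambda_k}=\frac{w_j/\E{\p_j}}{w_k/\E{\p_k}}>1,
\]
where $G(t):=\int_0^t g(v)\,dv$ is the cdf associated with $g$. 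Symmetrically $\prob{\alg:k\to j}=\int_0^\infty g(u)\,G(u/c)\,du$. The point of this normalization is that the unknown scale parameters have been absorbed, and the only thing that matters about the pair $(j,k)$ is the single number $c$.

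Finally I would compare against the borderline case $c=1$. A short Fubini/symmetry argument gives $\int_0^\infty g(u)\,G(u)\,du=1/2$: the integral of $g(u)g(v)$ over $\{v<u\}$ equals, by renaming, its integral over $\{u<v\}$, and the two sum to $1$. Because $G$ is nondecreasing and $c\ge 1$, we have $G(cu)\ge G(u)$ for every $u\ge 0$, so integrating against the nonnegative density $g$ yields $\prob{\alg:j\to k}=\int_0^\infty g(u)\,G(cu)\,du\ge\int_0^\infty g(u)\,G(u)\,du=1/2$, as required (and symmetrically $\prob{\alg:k\to j}\le 1/2$, consistent with the two probabilities summing to $1$). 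I do not expect a real obstacle here; the computation is routine once the substitution is set up, and the only mildly delicate point is the reliance on the density assumption to make ties negligible and $G$ continuous — which the authors already remark carries over to the discrete case.
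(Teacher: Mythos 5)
Your proposal is correct and follows essentially the same route as the paper: reduce to the pairwise condition via Theorem~\ref{thm:kappa} and Lemma~\ref{lem:one_half}, then apply the scaling substitution that absorbs the $\lambda_j$'s and leaves only the ratio $c=w_j\lambda_j/(w_k\lambda_k)$. The only (cosmetic) difference is that you compare $\int_0^\infty g(u)G(cu)\,du$ to the symmetric baseline $1/2$ using monotonicity of $G$, whereas the paper subtracts $\prob{\alg:k\to j}$ and observes the difference is a nonnegative integral over a strip; both are valid.
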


\begin{proof}
    Consider any two jobs $j,k$ and assume w.l.o.g.\ that $w_j/\E{\p_j}\ge w_k/\E{\p_k}$, which is equivalent to $w_j\lambda_j\ge w_k\lambda_k$. Again we make use of Theorem~\ref{thm:kappa} and Lemma~\ref{lem:one_half}, and show that $\prob{\alg:j\to k}\ge\prob{\alg:k\to j}$. %\puckmarg{The wlog statement on indices should come before the line about proving $\prob{\alg:j\to k}\ge\prob{\alg:k\to j}$, otherwise it doesn't make sense.} 
    Making use of \eqref{eq:formula_j_k}, and substituting $x$ by $\lambda_jx$ and $y$ by $\lambda_k y$, we get
    \[
    \prob{\alg: j\to k} = \int_{0}^{\infty}  g(y) \int_{0}^{\frac{w_j\lambda_j}{w_k\lambda_k}y} g(x)dxdy\,.
    \]
    Symmetrically, we get the same term for $\prob{\alg: k\to j}$, and subtracting gives
    \[
    \prob{\alg: j\to k}-\prob{\alg: k\to j}= \int_{0}^\infty g(y) \int_{\frac{w_k \lambda_k}{w_j \lambda_j}y}^{\frac{w_j \lambda_j}{w_k \lambda_k}y} g(x) dx\,dy\,.
    \]
    Now since $w_j\lambda_j\ge w_k\lambda_k$, and since the integrand is non-negative everywhere, this last term must be non-negative, which yields the claim.
\end{proof}

\subsection{Translations of Identical Processing Time Distributions}
\label{sec:translated}
Here we show that that \alg\ performs better than random if all processing time distributions are identical up to having different expectations. This result is restricted to the special case of uniform weights $w_j=1$, $j=1,\dots,n$, however.

\begin{definition}
    Let $g(\cdot)$ be the density of some random variable with expectation $E>0$, so that $g(x)=0$ for all $x<0$. Call the processing times \emph{translation-identical} for $g(\cdot)$, if processing time $\p_j$  has density $f_j(x)=g(x-E_j+E)$, for $E_j\ge E$. 
    %\puckmarg{This is tricky. If we first assume a valid $f_j$, then this definition works, I think. But for general $g()$, there are choices for $E_j$ that result in $f_j$ that have potential negative realizations of the processing time...}
\end{definition}
Note that by definition,  $\E{\p_j}=E_j$ for $j=1,\dots,n$.
\begin{theorem}
    Consider instances $\inst$ with uniform weights $w_j=1$ and translation-identical processing time distributions $\p_j$, for some $g(\cdot)$,  then 
    %$\rog_\inst(\alg)\le\frac12$.
    $\alg\le_\inst\rand$.
\end{theorem}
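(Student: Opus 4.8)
The plan is to invoke Theorem~\ref{thm:kappa} together with Lemma~\ref{lem:one_half} exactly as in the two preceding theorems, so the whole task reduces to showing that for any pair of jobs $j,k$ with $\E{\p_j}<\E{\p_k}$ (equivalently, since $w_j=w_k=1$, with $E_j<E_k$) we have $\prob{\alg:j\to k}\ge\prob{\alg:k\to j}$, hence $\prob{\alg:j\to k}\ge\tfrac12$. With uniform weights, \alg\ schedules $j$ before $k$ precisely when $p'_j\le p'_k$, so by \eqref{eq:formula_j_k} and the translation-identical form $f_j(x)=g(x-E_j+E)$ we have
\[
\prob{\alg:j\to k}=\int_0^\infty f_k(y)\int_0^{y} f_j(x)\,dx\,dy=\int_{E-E_k}^\infty g(t+E_k-E)\int_{E-E_k}^{\,t+E_k-E_j} g(s)\,ds\,dt,
\]
after the substitutions $t=y-E_k+E$, $s=x-E_j+E$ (note $g$ vanishes on negatives, so the lower limits may as well be taken at the support boundary). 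The analogous expression for $\prob{\alg:k\to j}$ is obtained by swapping the roles of $j$ and $k$ throughout.

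The key step is to compare these two double integrals. First I would rewrite the difference $\prob{\alg:j\to k}-\prob{\alg:k\to j}$ as a single integral of $g(s)g(t)$ over the region where one ordering is correct minus the region where it is wrong. Concretely, $\prob{\alg:j\to k}-\prob{\alg:k\to j}=\iint_{D} g(u)g(v)\,du\,dv$ where, writing $d:=E_k-E_j>0$, the set $D$ is (up to the common translation) $\{(u,v): v\le u+d\}\setminus\{(u,v): u\le v-d\}=\{(u,v): -d\le u-v\le d\}$ intersected with the support $[E-E_k,\infty)\times[E-E_j,\infty)$. That is, after the dust settles, the difference is the probability mass that $g\otimes g$ assigns to a diagonal strip $\{|u-v|\le d\}$, restricted to the relevant quadrant — and this is manifestly non-negative since $g\ge 0$. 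This mirrors the non-overlapping ``diagonal strip'' picture in Figure~\ref{fig:Qs_as_triangles_symmetric_case_flipping} and the analogous strip integral in the proof of Theorem~\ref{thm:scaled}.

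The main obstacle is bookkeeping with the different lower limits of integration: because $E_j\ne E_k$, the two densities $f_j$ and $f_k$ have supports that start at different points ($[E_j-E,\infty)$ versus $[E_k-E,\infty)$), so the two triangular/strip domains for $\prob{\alg:j\to k}$ and $\prob{\alg:k\to j}$ are not simply reflections of one another the way they were in the symmetric case. I would handle this by performing the substitution carefully and checking that, once everything is expressed in the shifted variables $u=x-E_j+E$ and $v=y-E_k+E$ (so that the \emph{same} function $g$ appears in both integrands), the two integration domains are a large region and a sub-region that differ exactly by the strip $\{|u-v|\le d\}$ within the first quadrant $u,v\ge 0$; the constraint $u\le v-d$ already forces $v\ge d>0$, so no support issues are lost, and symmetrically for the other. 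Once the domains are lined up correctly the positivity of $g$ finishes the argument; I expect the only real care needed is to confirm that changing the order of integration (Fubini) and the substitutions do not silently drop part of the strip near the boundaries of the supports.
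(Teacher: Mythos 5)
Your proposal is correct and follows essentially the same route as the paper: reduce via Theorem~\ref{thm:kappa} and Lemma~\ref{lem:one_half} to showing $\prob{\alg:j\to k}\ge\prob{\alg:k\to j}$, translate both densities to the common $g$, and identify the difference as the (non-negative) mass of $g\otimes g$ on a diagonal strip of half-width $E_k-E_j$, which is precisely the paper's $\int_0^\infty g(y)\int_{y-a}^{y+a}g(x)\,dx\,dy$. One small bookkeeping point: the set difference as you wrote it only becomes the strip $\{|u-v|\le d\}$ after first reflecting the ``wrong order'' region across the diagonal using the $u\leftrightarrow v$ exchange symmetry of $g\otimes g$ — which is exactly what the paper's substitution of $y-a$ for $y$ accomplishes.
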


\begin{proof}
    A final time we use Theorem~\ref{thm:kappa} and Lemma~\ref{lem:one_half}, and show that $\prob{\alg:j\to k}\ge\prob{\alg:k\to j}$ for any two jobs $j,k$ with $\E{\p_j}\le\E{\p_k}$. For the sake of the proof assume w.l.o.g.\ that $f_j(x)=g(x)$ and $f_k(x)=g(x-a)$, $a=E_k-E\ge 0$, are the density functions for $\p_j$ and $\p_k$, so that $\E{\p_k}=a+\E{\p_j}=a+E$. By the same arguments as in the proof of Theorem~\ref{thm:scaled}, and using the substitution $y-a$ for $y$, we easily see that 
    \[
    \prob{\alg: j\to k}-\prob{\alg: k\to j}=\int_0^\infty g(y) \int_{y-a}^{y+a}  g(x)dx\,dy\,.
    \]
    As the integrand is non-negative, and as $a\ge 0$, the term is non-negative.
\end{proof}
The following example shows, maybe surprisingly, that the above result does not generalize to the setting with arbitrary weights. (The example uses finite discrete distributions for simplicity, but could be adapted accordingly.)
\begin{example}\label{ex:weighted_counterexample}
    Consider an instance $I$ with $n=2$ jobs, weights $w_1=1$ and $w_2=2$, and processing time distributions 
    \[
    \p_1=\begin{cases}
        1, & \text{ with probability }1-\frac{1}{M}\,,\\
        M^2 & \text{ with probability } \frac{1}{M}\,,
    \end{cases} 
    \]
    and $\p_2$ identically distributed as  $\p_1$, but independently and translated by $1+\eps$ so that $\E{\p_2}=\E{\p_1}+1+\eps$.
\end{example}
Here, again assume that $M$ is large and take limits for $\eps\to 0$. Observe that Algorithm \alg\ schedules job~1 first whenever the sampled processing time $p'_1=1$, independent of the sample for $\p_2$.  This happens
with probability $(1-\frac{1}{M})$. Scheduling job~1 first has maximal expected cost which equals $5M^2$. The optimal solution is to schedule job~2 first, with expected cost $4M^2$. One readily verifies that  
\[
\rog_I(\alg)=\frac{(1-\frac{1}{M})5M^2+\frac{1}{M}4M^2-4M^2}{5M^2-4M^2}=\frac{M^2-M}{M^2}\to 1\ (\text{for $M\to\infty$})\,.
\]

%\puckmarg{The rog for this single instance is arbitrarily close to 1, depending on M and $\epsilon$, but \textit{not} equal to 1. If you want to say rog=1, that statement only holds over the set of instances with varying values of $\epsilon$ and $M$, since then we take the supremum.}

\subsection{Exponential Distributions and $\alpha$-separated Priorities}
\label{sec:far-apart}

So far we showed that using the processing time samples is at least as good as random, arguably the least one would hope for. It is not hard to come up with example instances showing that these results cannot be substantially improved without making additional assumptions on the input. For example, to drive the \rog\ to $1/2$ one can always add two jobs with large weights and almost identical input distributions, so that the gap between $\cost_I(\alg)-L_I$ and $H_I-L_I$ gets arbitrarily close to $1/2$.
Here we show how to derive a qualitatively better result, yet under rather strong assumptions on the input.

We consider the single machine scheduling problem with arbitrary weights $w_j$ and exponentially distributed processing times, so $\p_j$ has a distribution with density $f_j(x)=\lambda_je^{-\lambda_j x}$, $\lambda_j>0$. Let us define $\pi_j:=w_j\lambda_j=w_j/\E{\p_j}$ for all $j=1,\dots, n$, which can be seen as the priority of job~$j$. We know from Section~\ref{sec:scaled} that $\rog_\inst(\alg)\leq 1/2$ if $\inst$ are the instances with exponentially distributed processing times. Under an additional assumption on the priorities this can be improved.
Using \eqref{eq:formula_j_k} and the definition of the exponential distribution, elementary calculus yields the following. (The basic proof is in the appendix.)
\begin{restatable}{lemma}{lemmapjkexponential}
\label{lem:P-j-k-exponential}
    If processing times are exponentially distributed, then for any pair of jobs $j,k$
    \[
     \prob{\alg: j\to k}=\frac{\pi_j}{\pi_j+\pi_k}\,.
    \]
\end{restatable}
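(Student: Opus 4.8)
The plan is to evaluate the double integral in \eqref{eq:formula_j_k} directly for exponential densities. Starting from
\[
\prob{\alg: j\to k} = \int_{0}^{\infty} \lambda_k e^{-\lambda_k y} \int_{0}^{\frac{w_j}{w_k}y} \lambda_j e^{-\lambda_j x}\, dx\, dy,
\]
the inner integral evaluates to $1 - e^{-\lambda_j \frac{w_j}{w_k} y}$, so
\[
\prob{\alg: j\to k} = \int_{0}^{\infty} \lambda_k e^{-\lambda_k y}\, dy - \int_{0}^{\infty} \lambda_k e^{-(\lambda_k + \lambda_j \frac{w_j}{w_k}) y}\, dy = 1 - \frac{\lambda_k}{\lambda_k + \lambda_j \frac{w_j}{w_k}}.
\]

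The remaining step is pure algebra: multiply numerator and denominator of the fraction by $w_k$ to get $\frac{\lambda_k w_k}{\lambda_k w_k + \lambda_j w_j}$, recognize $\lambda_j w_j = \pi_j$ and $\lambda_k w_k = \pi_k$ by definition, and simplify $1 - \frac{\pi_k}{\pi_j + \pi_k} = \frac{\pi_j}{\pi_j + \pi_k}$, which is the claimed identity. (As a sanity check, $\prob{\alg: j\to k} + \prob{\alg: k\to j} = 1$, as it should be when ties have probability zero for continuous distributions.)

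There is essentially no obstacle here — the computation is a textbook exercise in integrating exponentials, which is presumably why the paper defers it to the appendix. The only minor point to be careful about is bookkeeping the weight factors $w_j/w_k$ correctly through the substitution so that the priorities $\pi_j = w_j\lambda_j$ emerge in the right places; a sign or reciprocal slip there would swap $\pi_j$ and $\pi_k$. One could alternatively derive the result from Theorem~\ref{thm:scaled}'s intermediate expression with $g(x) = e^{-x}$, obtaining $\prob{\alg: j\to k} = \int_0^\infty e^{-y}(1 - e^{-\frac{w_j\lambda_j}{w_k\lambda_k} y})\, dy$ directly, but the direct route from \eqref{eq:formula_j_k} is equally short.
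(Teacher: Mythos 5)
Your computation is correct and matches the paper's appendix proof in all essentials: the paper first rescales to the unit-rate exponential (integrand $e^{-x}e^{-y}$ with upper limit $\tfrac{\lambda_j w_j}{\lambda_k w_k}y$, exactly the alternative route you mention), while you integrate \eqref{eq:formula_j_k} directly, but both evaluate the same inner exponential integral and arrive at $1-\pi_k/(\pi_j+\pi_k)=\pi_j/(\pi_j+\pi_k)$. No gaps.
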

Then call the priorities $\pi_1,\dots,\pi_n$ of an instance with weights $w_j$ and exponentially distributed processing times \emph{$\alpha$-separated}, whenever the following is true for all pairs or jobs $j,k$: Either $\pi_j=\pi_k$, or $\pi_j$ and $\pi_k$ are at least a factor $\alpha$ apart, that is, $\max\{\pi_j/\pi_k,\pi_k/\pi_j\}\ge \alpha$. In other words, we have groups of jobs with identical priorities in each group, and the priorities across groups are at least a factor $\alpha$ apart. The intuition is that either $\pi_j=\pi_k$, in which case the order of these two jobs should not matter, or $\pi_j$ and $\pi_k$ are far apart, in which case algorithm \alg\ should have a high(er) probability for scheduling these two jobs in the correct order, hence leading to a better $\rog$ bound. This intuition is indeed correct.
\begin{theorem}\label{thm:a-separated}
    Consider instances $\inst=(w,\p)$  with exponentially distributed  processing times and $\alpha$-separated priorities $\pi_j$, for $\alpha\ge 1$,  then $\rog_\inst(\alg)\le\frac{1}{1+\alpha}$.
\end{theorem}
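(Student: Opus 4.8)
The plan is to reduce everything to Lemma~\ref{lem:P-j-k-exponential} together with Theorem~\ref{thm:kappa}. First I would fix an arbitrary instance $I\in\inst$ and an arbitrary ordered pair of jobs $j,k$ with $w_j/\E{\p_j} > w_k/\E{\p_k}$. Since $\pi_j=w_j\lambda_j=w_j/\E{\p_j}$, this is exactly the condition $\pi_j>\pi_k$. Because the priorities are $\alpha$-separated and $\pi_j\ne\pi_k$, the two priorities must be at least a factor $\alpha$ apart, and since $\pi_j$ is the larger one, $\max\{\pi_j/\pi_k,\pi_k/\pi_j\}=\pi_j/\pi_k$, so $\pi_j/\pi_k\ge\alpha$, equivalently $\pi_k/\pi_j\le 1/\alpha$.

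Next I would feed this into the closed form of Lemma~\ref{lem:P-j-k-exponential}. We have $\prob{\alg:j\to k}=\frac{\pi_j}{\pi_j+\pi_k}=\frac{1}{1+\pi_k/\pi_j}$, and since $t\mapsto 1/(1+t)$ is decreasing on $[0,\infty)$ and $\pi_k/\pi_j\le 1/\alpha$, this is at least $\frac{1}{1+1/\alpha}=\frac{\alpha}{1+\alpha}$. Thus setting $\kappa:=\frac{\alpha}{1+\alpha}$, we have $\prob{\alg:j\to k}\ge\kappa$ for every pair of jobs with $w_j/\E{\p_j}>w_k/\E{\p_k}$, and $\kappa\le 1$ since $\alpha\ge 1>0$. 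Applying Theorem~\ref{thm:kappa} (which already absorbs the pairs with $w_j/\E{\p_j}=w_k/\E{\p_k}$, where $\diff_{jk}=0$) then gives $\rog_I(\alg)\le 1-\kappa=1-\frac{\alpha}{1+\alpha}=\frac{1}{1+\alpha}$. As $I\in\inst$ was arbitrary, taking the supremum yields $\rog_\inst(\alg)\le\frac{1}{1+\alpha}$, as claimed.

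I do not expect a genuine obstacle in this proof: the analytic content has already been isolated into Lemma~\ref{lem:P-j-k-exponential} (the exponential integral computation) and Theorem~\ref{thm:kappa} (the reduction from pairwise misordering probabilities to the relative optimality gap). The only care needed is the bookkeeping of which of $\pi_j/\pi_k$, $\pi_k/\pi_j$ is the ``large'' one in the definition of $\alpha$-separation, so that the factor-$\alpha$ gap is applied in the correct direction. As a sanity check I would verify the boundary behaviour: for $\alpha=1$ the bound becomes $\rog_\inst(\alg)\le 1/2$, consistent with Theorem~\ref{thm:scaled} since exponential distributions are shape-uniform, and as $\alpha\to\infty$ the bound tends to $0$, matching the intuition that well-separated priorities make \alg\ order each relevant pair correctly with probability approaching $1$.
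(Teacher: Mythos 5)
Your proof is correct and follows essentially the same route as the paper's: apply Lemma~\ref{lem:P-j-k-exponential} to get $\prob{\alg:j\to k}=\frac{1}{1+\pi_k/\pi_j}\ge\frac{\alpha}{1+\alpha}$ for every pair with $\pi_j>\pi_k$, note that pairs with $\pi_j=\pi_k$ contribute nothing since $\diff_{jk}=0$, and conclude via Theorem~\ref{thm:kappa} with $\kappa=\frac{\alpha}{1+\alpha}$. Your sanity checks at $\alpha=1$ and $\alpha\to\infty$ are a nice addition but nothing in the argument differs in substance.
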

\begin{proof}
     To be able to use Theorem~\ref{thm:kappa} for an instance $I$, observe that it suffices to consider pairs of jobs $j,k$ with $\pi_j\ne\pi_k$, because otherwise the contribution of this pair of jobs $j,k$ to $\rog_I(\alg)$ is indeed zero in~\eqref{eq:Delta-formula-rog}.
     So take any pair of jobs $j,k$ with $\pi_j>\pi_k$, meaning that $w_j/\E{\p_j}>w_k/\E{\p_k}$ and the order $j\to k$ is optimal, then by Lemma~\ref{lem:P-j-k-exponential},
     $\prob{\alg:j\to k}=\frac{\pi_j}{\pi_j+\pi_k}=\frac{1}{1+\pi_k/\pi_j}\ge\frac{1}{1+1/\alpha}=\frac{\alpha}{\alpha+1}$, where the inequality is true because $\pi_j\ge\alpha\pi_k$. The claim now follows by using  Theorem~\ref{thm:kappa}.
\end{proof}
Admittedly, this result is still unsatisfactory in the sense that the condition of being $\alpha$-separated is not natural, and moreover, it may even seem unnecessary: Whenever a pair of jobs is $\alpha$-separated, we are good (by the above proof), and whenever it is not, we should not care too much about the order of these two jobs. Indeed this intuition can be worked out, and it gives improvements over the general bound $1/2$, in the sense that one can get a convex combination of the two bounds $1/2$ and ${1}/{(1+\alpha)}$. To substantiate this to an improved bound $< 1/2$ for \emph{all} exponential instances, however, requires additional knowledge on the input, and turns out to be both tedious and the improvement quite marginal~\cite{PuckMSc2023}. The reason is that the $\rog$ may be non-continuous at  $\diff_{jk}=0$ in \eqref{eq:Delta-formula-rog}.

%However there is a small generalization of this result that may have practical value, namely when we assume that the jobs have exponentially distributed processing times, and for any two jobs $j,k$, either their priorities $\pi_j,\pi_k$ are $\alpha$-separated, or they are identical. In other words, we have groups with identical priorities, and all these groups are $\alpha$-separated.
%\begin{theorem}
 %   Consider instances $\inst=(w,\p)$ with with exponentially distributed  processing times and $\alpha$-separated groups of jobs with identical priorities, for $\alpha\ge 1$, then $\rog_\inst(\alg)\le\frac{1}{1+\alpha}$.
%\end{theorem}
%\begin{proof}
 %   The proof works as the proof of Theorem~\ref{thm:a-separated}, because the contribution of any pair of jobs $j,k$ with identical priorities $\pi_j,\pi_k$
 %   has zero contribution to the  \dots   . 
%\end{proof}

\section{Conclusions}
\label{sec:final}
The three classes of processing time distributions for which we show that sample-based scheduling is better than random scheduling are quite reasonable from a practical viewpoint. More interestingly, in light of Example~\ref{ex:wouter_example}, the results can be called tight in a mild sense: First note that Example~\ref{ex:wouter_example} can clearly be tweaked from discrete to continuous distributions. The counterexample does its job because there are processing time distributions with (recall Example~\ref{ex:wouter_example} had unit weights $w_j=1$)
\begin{enumerate}[(i)]
    \item  two different shapes, one of them asymmetric,  
    \item  two different means.
\end{enumerate}
Moreover, as can be seen from Example~\ref{ex:weighted_counterexample}, as soon as we have an instance with non-uniform weights and different means, not even the \emph{same} distributions around different means allow for positive results, as long as these distributions are allowed to be asymmetric and are not shape-uniform.  One may wonder about instances with identical expected processing times, yet different distributions. But this is either not interesting because all schedules yield the same expected cost (in case of unit weights), or it would allow to replicate an example exactly analogous to Example~\ref{ex:wouter_example} by exchanging the roles of weights and expected processing times, which yields the same lower bound for instances with with arbitrary weights and identical expected processing times.

There are some open ends to take this work further. That includes the analysis for adversaries other than uniformly at random, 
the identification of other classes of distributions that would allow for positive results, 
and a less restrictive information regime with two or more samples, which would allow for algorithms that use a proxy for both expected processing time and variance. However to break Example~\ref{ex:wouter_example}, note that not even a polynomial number of samples is sufficient.
Finally, one could explore if nontrivial notions for expressing the quality of the ``learner'', i.e., the samples,  would allow to get improved performance bounds, in the spirit of learning augmented algorithms.

\bibliography{OneSample}

\begin{thebibliography}{10}

\bibitem{alg-predictions-website}
Algorithms with predictions.
\newblock \url{https://algorithms-with-predictions.github.io/}.
\newblock Accessed: 2023-08-02.

\bibitem{BMRS20}
E.~Bamas, A.~Maggiori, L.~Rohwedder, and O.~Svensson.
\newblock Learning augmented energy minimization via speed scaling.
\newblock In {\em Advances in Neural Information Processing Systems (NeurIPS
  2020)}, volume~33. NeurIPS, 2020.

\bibitem{CMNS2001}
C.~Chekuri, R.~Motwani, B.~Natarajan, and C.~Stein.
\newblock Approximation techniques for average completion time scheduling.
\newblock {\em {SIAM} Journal on Computing}, 31:146--166, 2001.

\bibitem{GMUX2020}
V.~Gupta, B.~Moseley, M.~Uetz, and Q.~Xie.
\newblock Greed works—online algorithms for unrelated machine stochastic
  scheduling.
\newblock {\em Mathematics of Operations Research}, 45(2):497--516, 2020.

\bibitem{GMUX2021}
V.~Gupta, B.~Moseley, M.~Uetz, and Q.~Xie.
\newblock Corrigendum: Greed works—online algorithms for unrelated machine
  stochastic scheduling.
\newblock {\em Mathematics of Operations Research}, 46(3):1230--1234, 2021.

\bibitem{HSSW97}
L.~A. Hall, A.~S. Schulz, D.~B. Shmoys, and J.~Wein.
\newblock Scheduling to minimize average completion time: Off-line and on-line
  approximation algorithms.
\newblock {\em Mathematics of Operations Research}, 22:513--544, 1997.

\bibitem{J2023}
Sven J\"ager.
\newblock An improved greedy algorithm for stochastic online scheduling on
  unrelated machines.
\newblock {\em Discrete Optimization}, 47:100753, 2023.

\bibitem{KPS18}
R.~Kumar, M.~Purohit, and Z.~Svitkina.
\newblock Improving online algorithms via {ML} predictions.
\newblock In {\em Advances in Neural Information Processing Systems (NeurIPS
  2018)}, volume~31. NeurIPS, 2018.

\bibitem{LLMV20}
S.~Lattanzi, T.~Lavastida, B.~Moseley, and S.~Vassilvitskii.
\newblock Online scheduling via learned weights.
\newblock In {\em Symposium on Discrete Algorithms (SODA 2020)}, page
  1859–1877. ACM-SIAM, 2020.

\bibitem{MUV2006}
N.~Megow, M.~Uetz, and T.~Vredeveld.
\newblock Models and algorithms for stochastic online scheduling.
\newblock {\em Mathematics of Operations Research}, 31(3):513--525, 2006.

\bibitem{MRW84}
R.~H. {M\"ohring}, F.~J. Radermacher, and G.~Weiss.
\newblock Stochastic scheduling problems \uppercase{I}: General strategies.
\newblock {\em ZOR - Zeitschrift f\"ur Operations Research}, 28:193--260, 1984.

\bibitem{MSU99}
R.~H. M{\"{o}}hring, A.~S. Schulz, and M.~Uetz.
\newblock Approximation in stochastic scheduling: {The} power of {LP}-based
  priority policies.
\newblock {\em Journal of the ACM}, 46:924--942, 1999.

\bibitem{PS2023}
K.~Pashkovich and A.~Sayutina.
\newblock Single sample prophet inequality for uniform matroids of rank 2,
  2023.
\newblock \href {http://arxiv.org/abs/2306.17716} {\path{arXiv:2306.17716}}.

\bibitem{Pinedo}
M.~Pinedo.
\newblock {\em Scheduling: Theory, Algorithms, and Systems}.
\newblock Springer, 5th edition, 2016.

\bibitem{Rothkopf66}
M.~H. Rothkopf.
\newblock Scheduling with random service times.
\newblock {\em Management Science}, 12:703--713, 1966.

\bibitem{RWW2019}
A.~Rubinstein, J.Z. Wang, and S.M. Weinberg.
\newblock Optimal single-choice prophet inequalities from samples, 2019.
\newblock \href {http://arxiv.org/abs/1911.07945} {\path{arXiv:1911.07945}}.

\bibitem{SU2005}
M.~Skutella and M.~Uetz.
\newblock Stochastic machine scheduling with precedence constraints.
\newblock {\em SIAM Journal on Computing}, 34:788--802, 2005.

\bibitem{Smith56}
W.~E. Smith.
\newblock Various optimizers for single-stage production.
\newblock {\em Naval Research and Logistics Quarterly}, 3:59--66, 1956.

\bibitem{PuckMSc2023}
Puck te~Rietmole.
\newblock Sampling-based stochastic single-machine scheduling.
\newblock Master's thesis, Utrecht University, 2023.

\bibitem{WZ20}
A.~Wei and F.~Zhang.
\newblock Optimal robustness-consistency trade-offs for learning-augmented
  online algorithms.
\newblock In {\em Advances in Neural Information Processing Systems (NeurIPS
  2020)}, volume~33. NeurIPS, 2020.

\end{thebibliography}

\appendix
\section{Proof Lemma~\ref{lem:P-j-k-exponential}}
\lemmapjkexponential*

\begin{proof}
    Using \eqref{eq:formula_j_k} and the definition of the exponential distribution, we get that $\prob{\alg:j\to k}$ equals
\begin{align*}
%\prob{\alg:j\to k} &= 
&\int_{0}^{\infty}  e^{-y} \int_{0}^{\frac{\lambda_j w_j}{\lambda_k w_k}y}  e^{-x}dx\,dy
%%%%%%%
%&
\ =\ \int_{0}^{\infty} e^{-y} \left[ -e^{-x} \right]_{x=0}^{x=\frac{\lambda_j w_j}{\lambda_k w_k}y}dy\\
%%%%%%%%
=&\  \int_{0}^{\infty}  e^{-y} \left[ -e^{-\frac{\lambda_j w_j}{\lambda_k w_k}y} + 1\right]dy
%%%%%%
%&
\ =\ \left[ \left( \frac{\lambda_j w_j}{\lambda_k w_k}+1 \right)^{-1} e^{-\left( \frac{\lambda_j w_j}{\lambda_k w_k}+1 \right)y} - e^{-y}\right]_{y=0}^{y=\infty}\\
%%%%%%%%
%&= \left[0 - \left( \left( \frac{\lambda_j w_j}{\lambda_k w_k}+1 \right)^{-1} - 1 \right) \right]\\
%%%%%%%%
=& \ 1 - \left( \frac{\lambda_j w_j + \lambda_k w_k}{\lambda_k w_k} \right)^{-1}
%%%%%%%%%
%&= \frac{\lambda_j w_j}{ \lambda_j w_j + \lambda_k w_k}\\
\ =\  \frac{\pi_j}{\pi_j + \pi_k}.
\end{align*}
\end{proof}
\end{document}